\newtheorem{thm}{Theorem}
\newtheorem{lemma}[thm]{Lemma}
\newtheorem{prop}[thm]{Proposition}
\theoremstyle{remark}
\newtheorem{remark}[thm]{Remark}
\theoremstyle{definition}
\numberwithin{thm}{section}
\numberwithin{equation}{section}
\definecolor{green}{rgb}{0.0, 0.5, 0.5}
\definecolor{yellow}{rgb}{0.5, 0.5, 0}
\definecolor{lgray}{gray}{0.9}
\definecolor{llgray}{gray}{0.95}
\definecolor{lllgray}{gray}{0.975}
\newcommand{\nc}{\newcommand}
\nc{\la}{\label}
\nc{\ba}{\begin{array}}
\nc{\ea}{\end{array}}
\nc{\bs}{\begin{split}}
\nc{\es}{\end{split}}
\newcommand{\R}{\mathbb{R}}
\newcommand{\C}{\mathbb{C}}
\newcommand{\Z}{\mathbb{Z}}
\newcommand{\cF}{\mathcal{F}}
\nc{\al}{\alpha}
\nc{\del}{\delta}
\nc{\h}{\delta}
\nc{\G}{\Gamma}
\nc{\et}{\eta} 
\nc{\g}{\gamma}
\nc{\gam}{\gamma}
\nc{\ka}{\kappa}
\nc{\lam}{\lambda}
\nc{\Lam}{\Lambda}
\nc{\Om}{\Omega}
\nc{\om}{\omega}
\nc{\ta}{\tau}
\nc{\w}{\omega}
\nc{\io}{\iota}
\nc{\z}{\zeta}
\nc{\s}{\sigma}
\nc{\Si}{\Sigma}
\nc{\vphi}{\varphi}
\nc{\bP}{\bar{P}}
\nc{\bQ}{\bar{Q}}
\nc{\ran}{\rangle}
\nc{\lan}{\langle}
\newcommand{\ls}{\lesssim}
\newcommand{\Ran}{\operatorname{Ran}}
\newcommand{\supp}{\operatorname{supp}}
\renewcommand{\Re}{\mathrm{Re}} 
\renewcommand{\Im}{\mathrm{Im}} 
\newcommand{\im}{{\rm Im}}
\newcommand{\Tr}{\mathrm{Tr}}
\newcommand{\tr}{\mathrm{Tr}}
\nc{\bfone}{{\bf 1}}
\newcommand{\p}{\partial}
\newcommand{\n}{\nabla}
\newcommand{\DETAILS}[1]{}
\newcommand{\x}{\lan x\ran}
\nc{\den}{\text{den}}
\nc{\ex}{\text{xc}}
\nc{\Ex}{\text{Xc}}
\title[Maximal Propagation Speed in Open Quantum Systems]{Maximal Speed of Propagation in Open Quantum Systems}
\date{February 4, 2022}
 \author[S.~Breteaux]{S\'ebastien Breteaux}
\address{Institut Elie Cartan de Lorraine, Universit\'e de Lorraine, 57045 Metz Cedex 1, France}
 \email{sebastien.breteaux@univ-lorraine.fr}
 \author[J.~Faupin]{J\'er\'emy Faupin}
\address{Institut Elie Cartan de Lorraine, Universit\'e de Lorraine, 57045 Metz Cedex 1, France}
 \email{jeremy.faupin@univ-lorraine.fr}
 \author[M.~Lemm]{Marius Lemm}
 \address{ Department of Mathematics, University of T\"ubingen, 72076 T\"ubingen, Germany}
 \email{marius.lemm@uni-tuebingen.de}
 \author[I.~M.~Sigal]{Israel Michael Sigal} 
 \thanks{The research of IMS is supported in part by NSERC Grant No. NA7901.}
 \address{Department of Mathematics, University of Toronto, 
Toronto, M5S 2E4, 
Canada}
 \email{im.sigal@utoronto.ca}
\begin{document}


 
\begin{center}

\begin{quote} \qquad \qquad \qquad  	{\it To Elliott Lieb in recognition of his groundbreaking\\  \, \qquad \qquad \qquad work in mathematical and theoretical physics 
 } \end{quote}
 \end{center}

\bigskip

\maketitle

\begin{abstract}
We prove a maximal velocity bound for the dynamics of Markovian open quantum systems. The dynamics are described by  one-parameter semigroups of quantum channels 
satisfying the von Neumann-Lindblad equation. Our result says that dynamically evolving states are contained inside a suitable light cone up to polynomial errors. We also give a bound on the slope of the light cone, i.e., the maximal propagation speed. The result implies 
 an upper bound on the speed of propagation of local perturbations of stationary states 
  in open quantum systems.



 \bigskip

\noindent Subjclass: {Primary 35Q40; Secondary 35P25; 81P47; 81U99}

\noindent Keywords: {Maximal speed bound, quantum dynamics, Markovian open quantum systems, von Neumann-Lindblad equation}
  \end{abstract}


 \section{Introduction} 

 In this paper, we prove a maximal velocity bound for the dynamics of Markovian open quantum systems. 
 
 Our work is inspired by the celebrated Lieb-Robinson bounds  \cite{LR}  on propagation of quantum correlations in quantum spin systems. The Lieb-Robinson bounds established a fundamental physical principle in Statistical Mechanics by showing rigorously that quantum correlations (specifically, commutators of local observables) are restricted to an effective light cone in space-time.  They also provided an effective tool in many areas of quantum physics. 
 
 For examples of breakthroughs in quantum many-body theory that leveraged Lieb-Robinson bounds in essential ways, we mention (i) Hastings' proof of the area law for the entanglement entropy for ground states of 1D gapped Hamiltonians \cite{H07}, (ii) the proofs by Hastings-Koma and Nachtergaele-Sims of the folklore assertion that a gap leads to exponential decay of correlations \cite{HK,NS1}, and (iii) the modern classification of topological quantum phases \cite{BMNS,H04,HW}. 
 
 Many other applications of Lieb-Robinson bounds have been found since then in areas as diverse as condensed-matter physics, quantum information science and high-energy physics to name a few \cite{BdRF,BHM,BHV,KS_he,LVV,NS_ls,RS}. Effective light cones of Lieb-Robinson type have been observed in the laboratory \cite{exp1,exp2} and in sophisticated numerical experiments \cite{nexp1,nexp2}, typically starting from a quench of the system. 

The supreme usefulness of Lieb-Robinson bounds has also led to many extensions and variants of the original result. Indeed, exploring the scope of Lieb-Robinson bounds has developed into its own branch of research; see, e.g., \cite{Aetal,DLLY1,DLLY2,DLY,DV,EKS,FLS1,FLS2,F,GL,GMN,GNRS,H22,HSS,K,KS,NOS,NRSS,NSY1,NSY2,Pou,SHOE,WH,YL}. Recent efforts in this highly active  area have been especially focused on extensions to fermionic \cite{GNRS,NSY2} or bosonic \cite{FLS1,FLS2,KS,SHOE,WH,YL} Hamiltonians and to long-range interactions \cite{EMNY,Fossetal,KS_he,MKN,Tranetal}. See also \cite{GMN,H22} for very novel directions. We particularly emphasize the works \cite{DV, NVZ} and 
 \cite{Pou} since they concern open quantum systems, which are also the topic of this paper. For more information on Lieb-Robinson bounds, we recommend the reviews \cite{H10,KGE,NS2}. 

  In an independent later development in $n$-body quantum mechanics,   it has been demonstrated in \cite{SigSof} that, up to small probability tails vanishing with time, the supports of wave function solutions of the Schr\"odinger equation spread with a finite speed.    
  This result was further improved in \cite{APSS,HeSk,Skib}, with \cite{APSS} proving  an energy dependent bound on the maximal speed of propagation. It was  extended in \cite{BonyFaupSig} to photons interacting with an atomic or molecular system (see also \cite{DerGer2,  FrGrSchl2, FrGrSchl3, FrGrSchl4, Ger}) 
 The above bounds were used in a fundamental way in the scattering theory (see \cite{DeRoGriKu,Der, DerGer, DerGer2, FaSig, FrGrSchl2, FrGrSchl3, FrGrSchl4, Ger, HuangSof, SigSof2}). 
 Furthermore, \cite{FLS1, FLS2} and \cite{AFPS} developed related techniques in condensed matter physics  to prove the maximum velocity bounds for transport of particles in the Bose-Hubbard model in the thermodynamic regime and in the nonlinear Hartree many-body mean-field dynamics, respectively. 
In this paper, we extend this approach to Markov open quantum systems.

 The link between this approach and Lieb-Robinson bounds was made in \cite{FLS2} when a Lieb-Robinson bound was proved for the Bose-Hubbard model by similar techniques.
For more on the relation of the velocity bounds that we obtain here to Lieb-Robinson bounds 
see the end of Subsection \ref{ref:result}.

 \subsection{The von Neumann-Lindblad equation}
 The dynamics of open quantum systems originate from   the unitary dynamics of systems interacting with an environment by tracing out the latter. States of such systems 
  are described by  density operators $\rho$, i.e. positive trace class operators, $\rho=\rho^{*}\geq 0,\hspace{0.2cm}\Tr(\rho)<\infty$, on some Hilbert space $\mathcal{H}$.

  We are interested in open quantum dynamics under the usual Markov (semi-group) assumption.
It has been proven in  \cite{GKS,MR0413878} that, for finite-dimensional Hilbert spaces,  Markovian open quantum dynamics satisfy the   von Neumann-Lindblad (vNL) equation 
 \footnote{Here and in what follows we use the units in which the Planck constant is set to $2\pi$, and the speed of light to 1: $\hbar=1$ and $c=1$.}  
  \begin{align}\label{vNLeq}
	&\frac{\partial\rho_t}{\partial t}=-i[H,\rho_t]+\frac12\sum_{j\geq 1}\big([W_{j},\rho_t W_{j}^{*}]+[W_{j}\rho_t,W_{j}^{*}]\big). 
\end{align}
Here   $H$ is a self-adjoint operator on $\mathcal{H}$,  the quantum Hamiltonian of a proper quantum system, and 
  $W_{j}$ are bounded operators. We assume that $\sum\nolimits_{j\geq 1}W_{j}^{*}W_{j}$ converges in the space of bounded operators $\mathcal{B}(\mathcal{H})$.  
  In what follows, we always assume the above properties of $H$ and $W_j$.  
 
 We deal with \eqref{vNLeq} for infinite-dimensional Hilbert spaces, where it is conjectured that  the statement above is true as well. In any case, it was shown in \cite{Davies,IngardenKossakowski,kossa} that  the converse statement, i.e. that  the dynamics generated by \eqref{vNLeq} is a quantum dynamical semigroup, holds for both finite-dimensional and infinite-dimensional Hilbert spaces. Hence, if we wish to avoid the conjecture, we may confine ourselves to considering  Markovian open quantum dynamics generated by the vNL
equations.  
 
 For a discussion of existence results and, in particular, for a definition of the weak solution used in the main theorem below, see Subsection \ref{sect:exist}. For a discussion of open quantum systems and irreversibility, see \cite{Fr, GS}.

 \subsection{Main result}\label{ref:result}

Now we suppose that $\mathcal{H}=L^2(\mathbb{R}^d)$. For a Borel set $A$, we let $\chi_A$ denote the characteristic function of $A$. For an operator $B$, the symbol $\mathcal{D}(B)$ denotes the domain of $B$.

We are interested in proving that the solution $\rho_t$ to \eqref{vNLeq} obeys a {\it maximal propagation speed bound (MSB)}. By this we mean that, under an appropriate localization assumption on the initial state, there is a constant $c<\infty$ s.t.~the probability, $\Tr(\chi_{|x|\ge ct}\rho_t)$, that  the system 
 is localized in the domain $\{|x|\ge ct\}$ vanishes, as $t\to\infty$:
\begin{align} \label{max-speed-def}
\Tr(\chi_{|x|\ge ct}\rho_t)\to 0. 
\end{align}
In fact, the main result will allow for a non-localized stationary part and is thus more general.
Moreover, the scalar  $c_{\rm max}:=\inf \{ c:$ \eqref{max-speed-def} holds$\}$ will be called  the {\it maximal propagation speed}. 

 \medskip
 
In this article we make no distinction in our notation between functions and the operators of multiplication defined by those functions.

 \medskip
\paragraph{\bf Assumptions.} Denote $\langle x\rangle = \sqrt{1+|x|^2}$. We assume that
\begin{equation}\label{Hassmpt:main'} 
\langle x\rangle^{-1}\mathcal{D}(H)\subset\mathcal{D}(H),
\end{equation}
and that
  for some positive integer $n$, the following estimates on $H$ and $W_j$ hold
\begin{align}
\label{Hassmpt:main}&\|\mathrm{ad}_{\x}^k(H)\|\ls 1, \quad \mathrm{for}\quad  1\leq k \leq n, \\
\label{Wassmpt:main}\sum_{j\geq 1} &\|\mathrm{ad}_{\x}^k W_j\|^2\ls 1, \quad \mathrm{for} \quad   1\leq k \leq n. 
\end{align}

Here the commutators $\mathrm{ad}_{\x}^k(H)$ are defined recursively by $\mathrm{ad}_{\x}^0(H)=H$ and, for all integer $k$, $\mathrm{ad}_{\x}^{k+1}(H)=[\mathrm{ad}_{\x}^k(H),\x]$.\footnote{As usual, the commutator between two operators $A$, $B$ is defined as a quadratic form on $\mathcal{D}(A)\cap\mathcal{D}(B)$. Assumptions \eqref{Hassmpt:main}--\eqref{Wassmpt:main} postulate that the commutators extend to elements of $\mathcal{B}(\mathcal{H})$. }

We describe examples of $H$ and $W_j$ of interest in Subsection \ref{sect:discussion} below.
By  Assumptions \eqref{Hassmpt:main}-\eqref{Wassmpt:main}, with $k=1$, the operator 
\begin{align}
\label{gam}\g:=i[H,\x]+\frac{1}{2}\sum_{j\geq 1} \big(W_j^* [\x, W_j]+[W_j^*, \x] W_j\big)\end{align} 
extends to a bounded operator. Its physical meaning is discussed in Subsection \ref{sect:discussion} below. Its norm
  \begin{align} \label{kappa} 
 \kappa:=\left\| \g\right\|,
\end{align} 
 will give a bound on the maximal propagation speed. 
 We introduce the regions and the corresponding characteristic functions
\begin{equation*}
A_\eta:=\{x\in \R^d: \langle x \rangle\ge \eta\}, \quad \textnormal{ and }\quad\chi_b:=\chi_{A_{b}}.
\end{equation*}

We say that a state $\rho_\mathrm{st}$ is a static solution to  \eqref{vNLeq} if it is a time-independent bounded operator that solves \eqref{vNLeq}.

  Our main result is the following theorem. 
  
\begin{thm}[Maximal propagation speed bound] \label{thm:msb}
Suppose that Assumptions \eqref{Hassmpt:main'}--\eqref{Wassmpt:main} hold for some positive integer $n$. Let  $\rho_0 := \rho_{\rm st} + \lam$, where $\rho_{\rm st}\ge 0$ is a static solution to \eqref{vNLeq} and $ \lam$ is a trace-class operator s.t. $ \lam\ge 0$ (or $-\rho_{\rm st}\le \lam\le 0$) and $\chi_b \lam=0$ for some $b>0$. Then,  for all $a>b$, $c>\kappa$, there exists $C_n>0$ such that  the unique weak solution $\rho_t$ to \eqref{vNLeq}  with the initial condition $\rho_0$ satisfies the estimate
\begin{equation} \label{max-vel-est}
 \Tr(\chi_{{\eta}}\, \rho_t)\leq \, C_n \eta^{1-n} + \Tr(\chi_{{\eta}} \, \rho_\mathrm{st}),\qquad \textnormal{for all }\eta \geq a + ct,\, t>0.
 \end{equation}
\end{thm}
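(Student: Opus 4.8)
\emph{The plan} is to run a Sigal--Soffer-type adiabatic spacetime localization observable (ASTLO) argument in the Heisenberg picture of the Lindblad semigroup, combined with an induction on the commutator order $n$. The starting algebraic observation is that the operator $\gamma$ in \eqref{gam} is exactly $\mathcal{L}^*(\langle x\rangle)$, where $\mathcal{L}^*(A)=i[H,A]+\sum_j(W_j^*AW_j-\tfrac12\{W_j^*W_j,A\})$ is the generator dual to \eqref{vNLeq} for the pairing $\Tr(A\rho)$; thus $\kappa=\|\gamma\|$ is the operator norm of the dissipative drift of the position weight, and since $\gamma=\gamma^*$ one has $-\kappa\le\gamma\le\kappa$. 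First I would reduce to a localized perturbation: by linearity and staticity the unique weak solution is $\rho_t=\rho_{\rm st}+\lambda_t$ with $\lambda_t:=e^{t\mathcal{L}}\lambda$, so $\Tr(\chi_\eta\rho_t)=\Tr(\chi_\eta\rho_{\rm st})+\Tr(\chi_\eta\lambda_t)$. Trace preservation gives $\Tr\lambda_t=\Tr\lambda$, and in the case $-\rho_{\rm st}\le\lambda\le 0$ one has $\lambda_t\le 0$ and the claim is immediate; hence it remains to prove, for $\lambda\ge 0$ supported in $\{\langle x\rangle<b\}$, the bound $\Tr(\chi_\eta\lambda_t)\le C_n\eta^{1-n}$ for $\eta\ge a+ct$.

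For this I would use the self-similar propagation observable $\phi_s=\bar f(\langle x\rangle/R(s))$, $R(s)=R_0+c_1 s$, where $\bar f$ is smooth, nondecreasing, vanishes on $(-\infty,1]$, equals $1$ on $[1+\delta,\infty)$, $\kappa<c_1<c$, and $\delta$ is small. The scale $R_0$ is fixed by the requirement $\phi_T\ge\chi_\eta$, i.e. $(1+\delta)R(T)\le\eta$, which together with $\eta\ge a+ct$ and $c_1(1+\delta)<c$ forces $R_0\gtrsim\eta$; this is the mechanism converting time decay into the spatial rate $\eta^{1-n}$. Writing $g=\sqrt{\bar f'}$ and computing the Heisenberg derivative $D_s\phi_s=\partial_s\phi_s+\mathcal{L}^*\phi_s$, the transport term $\partial_s\phi_s\le-\tfrac{c_1}{R}\bar f'$ (using $\langle x\rangle\ge R$ on $\supp\bar f'$) competes with the leading dissipative term $\tfrac1R g\gamma g\le\tfrac{\kappa}{R}\bar f'$, yielding $D_s\phi_s\le-\tfrac{c_1-\kappa}{R}\bar f'+\mathcal{R}_s$. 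The key structural point is that the remainder $\mathcal{R}_s$ is a sum over $2\le k\le n$ of terms $\tfrac{C_k}{R^k}\bar f^{(k)}(\langle x\rangle/R)$ dressed with the bounded operators $\mathrm{ad}_{\langle x\rangle}^k(H)$ and, for the dissipative part, sums $\sum_j$ of products of $W_j^{(\#)}$ and $\mathrm{ad}_{\langle x\rangle}^k(W_j)$ controlled by $\sum_j\|\mathrm{ad}_{\langle x\rangle}^k W_j\|^2$ (finite by \eqref{Hassmpt:main}--\eqref{Wassmpt:main}), all supported in $\{\langle x\rangle\ge R(s)\}$, plus an order-$n$ remainder of the same type.

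The induction on $n$ then closes as follows. The base case $n=1$ is the mass bound $\Tr(\chi_\eta\lambda_t)\le\Tr\lambda$. Assuming the order-$(n-1)$ estimate for all admissible cone parameters, I drop the nonpositive leading term, integrate $\frac{d}{ds}\Tr(\phi_s\lambda_s)=\Tr(D_s\phi_s\,\lambda_s)$ over $[0,T]$, and use $\Tr(\phi_0\lambda)=0$ (from $\supp\lambda\subset\{\langle x\rangle<b\}\subset\{\langle x\rangle<R_0\}$). Since $\lambda_s\ge 0$ and $|\bar f^{(k)}(\langle x\rangle/R)|\le\|\bar f^{(k)}\|_\infty\chi_{R(s)}$, each remainder term is bounded by $\tfrac{C_k}{R(s)^k}\Tr(\chi_{R(s)}\lambda_s)$, and the inductive hypothesis, applied at radius $R(s)=R_0+c_1 s\ge a'+c's$ for suitable $\kappa<c'<c_1$ and $b<a'\le R_0$, gives $\Tr(\chi_{R(s)}\lambda_s)\le C_{n-1}R(s)^{2-n}$. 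The dominant $k=2$ term then contributes $\int_0^T C\,R(s)^{-n}\,ds\lesssim R_0^{1-n}\lesssim\eta^{1-n}$, while $k\ge 3$ and the order-$n$ remainder are of lower order; finally $\chi_\eta\le\phi_T$ converts this into the desired bound on $\Tr(\chi_\eta\lambda_T)$.

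I expect \emph{the main obstacle} to be the rigorous commutator expansion of $\mathcal{L}^*(\bar f(\langle x\rangle/R))$ with a remainder controlled solely by $\|\mathrm{ad}_{\langle x\rangle}^n(H)\|$ and $(\sum_j\|\mathrm{ad}_{\langle x\rangle}^n W_j\|^2)^{1/2}$: one must handle the unboundedness of $\langle x\rangle$ and the noncommutativity in the symmetrized leading term $g\gamma g$ (the square-root trick requires $\bar f'$ to have a smooth square root, and the double-commutator correction $\tfrac12[g,[g,\gamma]]$ must be absorbed into $\mathcal{R}_s$), most cleanly via a Helffer--Sj\"ostrand representation of $\bar f$. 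A secondary difficulty is functional-analytic: the identities above are a priori formal and must be justified at the level of the weak solution of \eqref{vNLeq}, which I would do by a regularization argument with bounded approximants of $\langle x\rangle$ and of $\phi_s$, passing to the limit using \eqref{Hassmpt:main'} and the uniform commutator bounds.
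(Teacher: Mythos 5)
Your proposal is correct in its essentials and rests on the same core mechanism as the paper: a smoothed cutoff adapted to the cone $\langle x\rangle \sim a+c't$, a Helffer--Sj\"ostrand commutator expansion of $L'$ applied to it (the paper's Lemma \ref{lem:commut-exp}), symmetrization via $g=\sqrt{f'}$ so that the leading term becomes $s^{-1}g\,\gamma\,g\le \kappa s^{-1}g^2$ with $\gamma=L'\langle x\rangle$ as in \eqref{gam}, and absorption of this by the transport term $-c's^{-1}f'$ --- this is exactly the recursive monotonicity estimate \eqref{DPhi-est1} of Proposition \ref{prop:qm-est1}, including your correctly identified obstacles (smooth square root of $f'$, the domain issue resolved by \eqref{Hassmpt:main'}, summability in $j$ from \eqref{Wassmpt:main} via Cauchy--Schwarz). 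Where you genuinely diverge is in how the recursion is closed. The paper stays at a single smearing scale $s=(\eta-a)/c$ and iterates the integrated propagation estimate \eqref{propag-est2} through a chain of smooth functions $f\to\tilde f\to\hat f\to\cdots$ in the class $\cF$, using the positive term $(c'-\kappa)s^{-1}\int_0^t\langle f'_{ts}\rangle_r\,dr$ on the left to control the $\tilde f'$-integral on the right, gaining one power of $s^{-1}$ per step and never leaving the world of smooth cutoffs or invoking the theorem at lower order. You instead run a strong induction on $n$ at the level of the theorem itself: you dominate each remainder localization $\bar f^{(k)}(\langle x\rangle/R(s))$ by $\chi_{R(s)}$ (legitimate here because $\lambda_s\ge0$, so operator-norm bounds on the symmetrized remainders convert to traces against $\chi_{R(s)}$), apply the order-$(n-1)$ statement along the expanding cone $R(s)=R_0+c_1s\ge a'+c's$, and integrate $\int_0^T R(s)^{-n}ds\lesssim R_0^{1-n}\sim\eta^{1-n}$. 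Both closures are valid and yield the same rate and the same speed bound $\kappa$; the paper's version is self-contained within one application of the theorem and avoids comparing derivatives of $f$ to sharp characteristic functions, while yours is arguably more transparent as an induction but leans more heavily on positivity of the evolved perturbation and on uniformity of the constant $C_{n-1}$ in the cone parameters $(a',c')$, which you correctly arrange by fixing $a'\in(b,a)$ and $c'\in(\kappa,c_1)$ once and for all. Two cosmetic points: your claim that the order-$n$ error is ``of lower order'' is slightly off --- it contributes at the same order $\eta^{1-n}$ as the $k=2$ term, which is harmless --- and $\gamma=L'\langle x\rangle$ should be read formally since $\langle x\rangle\notin\mathcal B(\mathcal H)$, exactly as the paper notes.
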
 


In a nutshell,  Theorem \ref{thm:msb} says that under the vNL dynamics, the leakage of the particle probability outside of the light cone $\eta\sim a + ct$ is polynomially suppressed for any $c>\kappa$. In other words, $\kappa$ bounds the maximal propagation speed of particles. We remark that the initial condition $\rho_0=\rho_{\rm st}+ \lam$ 
  appearing in Theorem \ref{thm:msb}  is not localized around the origin, unless $\rho_{\rm st}=0$.
  
To interpret the result, we recall that the dynamics 
generated by the vNL equation,  
are given by  linear, strongly continuous, one-parameter semigroups of trace-preserving and completely positive contractions, called quantum dynamical semi-groups. (A converse statement was proven, for finite-dimensional Hilbert spaces, in \cite{MR0413878}.)   As  linear, completely positive maps define quantum channels with quantum information encoded in density operators, the  vNL equation could be interpreted as describing transmission of quantum information along a quantum channel defined by the vNL equation \eqref{vNLeq}. 
   Then estimate \eqref{max-vel-est} establishes that the quantum information is transmitted with a finite speed and gives an explicit  bound on the maximal speed  of the transmission.

  This result may be compared to the MSB for the Schr\"odinger equation (\cite{APSS}), on the one hand, and the LR bounds with   the Lindblad term (\cite{Pou}), on the other.

To compare our bounds with the Lieb-Robinson ones, 
the latter deal with the propagation of correlations 
in quantum statistical mechanics of macroscopic (or bulk) systems, 
while ours deal with the propagation of localization of probabilities in quantum mechanical systems at the zero density, i.e. with a finite number of particles propagating in an infinite physical space.  

On a technical level, our approach works in both continuous and discrete cases and for unbounded interactions, while with exception of \cite{GNRS} and \cite{NRSS}, the Lieb-Robinson bounds are obtained for discrete Hamiltonians and bounded interactions. 

 Moreover, we allow rather general interactions which could be of $N$-body type. While dependence of  constants on the dimension, i.e. on the number of particles $N$,  is not controlled here, our techniques are adaptable to the  quantum statistical mechanics setting as shown in \cite{FLS1, FLS2}.



\subsection{Existence of solutions to the von Neumann-Lindblad equation}\label{sect:exist}   
 Denote by $S_{1}$ the Schatten space  of trace class operators. Let $L$ 
  be the operator on $S_{1}$ defined by the r.h.s. of \eqref{vNLeq},  i.e.,
 \begin{equation}\label{L}
 L\rho=-i[H, \rho]+\frac12\sum_{j}\big([W_{j},\rho W_{j}^{*}]+[W_{j}\rho,W_{j}^{*}]\big), 
\end{equation}
with the domain $\mathcal{D}(L)= \mathcal{D}(L_0)$, where $L_0\rho :=-i[H, \rho]$, or explicitly 
 \begin{align}
\mathcal{D}(L):=\, \big\{\rho &\in  S_1\, | \, \rho\mathcal{D}(H)\subset\mathcal{D}(H) \text{ and } H\rho-\rho H \in  S_1 \big \} \subset S_1.
 \end{align}
Let $L'$ be the  operator on the space of observables $\mathcal{B}(\mathcal{H})$ dual  of $L$ with respect to the coupling $(A, \rho):= \tr(A \rho)$, i.e. \[\tr(A L\rho)= \tr((L'A) \rho),\] for $\rho\in\mathcal{D}(L)$ and $A\in\mathcal{D}(L')\subset\mathcal{B}(\mathcal{H}) $ 
 (see \eqref{L'} for an explicit expression).\footnote{$L'$ generates  the dual Heisenberg-Lindblad  evolution
$ \partial_t A_t= L' A_t$ of quantum observables.} We say that \eqref{vNLeq} has a weak solution $\rho_{t}$ in $S_1$, if for any observable $A$, i.e.~$A\in\mathcal{B}(\mathcal{H}) $, in the domain of the operator $L'$, we have \[\tr\left(A \frac{\partial\rho_t}{\partial t}\right)= \tr((L'A) \rho)\] (see e.g. \cite{OS}).
By a standard argument,  for any initial condition $\rho_{0}\in S_1$, \eqref{vNLeq} has a unique weak solution in $S_{1}$ 
 (see e.g. \cite[Section 5.5]{Davies}, \cite[Appendix A]{FFFS} or \cite{OS} for a detailed discussion). 
One can show further (see \cite{AlickiLendi,Davies,FFFS,IngardenKossakowski,kossa}) that $L$ defines a completely positive, trace preserving, strongly continuous semigroup of contractions so that, 
 in particular:
\[\rho_{t}\ge 0 \quad \text{ if } \quad  \rho_{0}\ge 0,  \quad  \text{ and } \quad\tr\rho_{t}=\tr\rho_{0}.\]


\subsection{Discussion of Theorem \ref{thm:msb}} 
\label{sect:discussion}


The operator $\gamma$ given in  \eqref{gam} can be formally written as \[\g=L'\x.\] 
It essentially represents  the component of the velocity operator $L' x$ along $x$. We expect that in many circumstances the environment will produce such quantum ``friction'' 
 and even lead to equilibration (see \cite{BFS,FrMe,JaPi,Rob,Spohn} for analysis for quantum systems of finite degrees of freedom).  Thus, we formulate the following conjecture:
 
 \medskip
 
\paragraph{\textit{Conjecture}.}  
For generic $W_j\neq 0$ and $H$, it holds that $\kappa<\|\mathrm{ad}_{\x} (H)\|$.

 \medskip

This conjecture would imply that the maximal propagation speed is smaller than $\|\mathrm{ad}_{\x} (H)\|$. A weaker version of the conjecture would be that the maximal propagation speed of any open quantum system with $W_j\neq 0$ is bounded by $\|\mathrm{ad}_{\x} (H)\|$. (This weaker version would be implied by the conjecture stated before since, by the result presented here, $\kappa$ bounds the maximal propagation speed of the open quantum system.)

Let us now discuss specific choices for the operators $H$ and $W_j$.

(A) The key example of the operator $H$ is the Schr\"odinger-type operator \begin{align} \label{H-ex}H=\om(p)+V(x),\end{align} with momentum operator $p:=-i\n$. To satisfy our assumptions, we require that the for the kinetic energy symbol $\om$ that $|\p^\al\om(\xi)|\ls 1$ for $1\leq |\alpha| \leq n$ and for the potential $V(x)$ that it is $\om(p)$-bounded with the relative bound $<1$. We recall that relative boundedness means that
\begin{align} \label{V-cond}
& \exists \,0\le a_1<1,\ a_2>0: \quad
\|Vu\|\le a_1 \| \om(p) u\|+ a_2\|u\|, 
\end{align}
with $\|\cdot\|$ being the norm in $L^2(\R^d)$.
By the Kato-Rellich theorem  (see e.g. \cite{CFKS}), these assumptions ensure that $H$ is self-adjoint on the domain of $\om(p)$.
 
 (B)  Another example in which the operators $H$ satisfy our assumptions arises if we consider the vNL equation on $\Z^d$, where the derivatives are automatically bounded.

 (C) Examples of the Kraus-Lindblad  operators  $W_j$ such that $\mathrm{ad}_{\x}^k W_j$ are bounded are provided by  pseudodifferential operators, $W_j=w_j(x, p)$  with symbols $w_j(x, \xi)$ satisfying the estimates 
\[|\p_x^\beta \p_\xi^\al w_j(x, \xi)|\ls \lan \xi\ran^{- \del' |\al|+\del|\beta|},\]
 for $0\le\del<\del' \le 1 $ and the multi-indices $\al$ and $\beta$, with $|\al|$ and $|\beta|$ sufficiently large.

 (D)  For specific physical examples of the Kraus-Lindblad operators $W_j$, see \cite[Section 4]{FFFS}.

We note that in its current form, the assumption $|\p^\al\om(\xi)|\ls 1$ for $1\leq |\alpha| \leq n$ excludes the Laplacian and thus the standard Schr\"odinger operator. The underlying reason is Assumption \eqref{Hassmpt:main}, with $k=1$, which requires good ultraviolet behaviour. 

\medskip

\paragraph{\textit{Open problem}.} Relax Assumption \eqref{Hassmpt:main}, with $k=1$, to $H$-boundedness of  $\mathrm{ad}_{\x}(H)$. (In this case, we can allow $\om(p)$ in \eqref{H-ex} satisfying $|\p\om(\xi)|\ls \om(\xi)$ instead of $|\p\om(\xi)|\ls 1$ and  therefore the standard Schr\"odinger operators, with  $\om(p)=|p|^2=-\Delta$.)

\begin{remark} The scattering theory for von Neumann-Lindbald equations generated by unbounded operators has been studied in \cite{Alicki1,AlickiFrigerio,Davies2,FFFS,FaFr18_01}. Assuming that $H$ has purely absolutely continuous spectrum and that the operators $W_j$ satisfy a suitable smallness condition, it is proven in these references that the dynamics given by \eqref{vNLeq} asymptotically converge, as $t\to\infty$, to the free, Hamiltonian dynamics given by  the von Neumann equation 
\begin{align}\label{vNLeq_free}&\frac{\partial\rho_t}{\partial t}=-i[H_0,\rho_t]\end{align}
where $H_0$ is the Hamiltonian of the closed system.
Note that under these special conditions, a weak version of the maximal velocity bound can be deduced from this scattering result. In the general case, the scattering theory must be modified to allow for a description of the phenomenon of absorption, or capture \cite{AlickiFrigerio,Davies2,FFFS}.

For earlier results on the problem of return to equilibrium, see \cite{Bardetetal,BaNarTh,BaNar,FaRe,Rob, Spohn}.  
\end{remark}

At the core of our proof lies a construction of  {\it propagation observables} satisfying the {\it recursive monotonicity estimate} (RME). Section \ref{sec:msb-pf}  presents the general method we use without using specificities of the dynamics. In this section, we derive Theorem \ref{thm:msb} from the RME. The following  Section \ref{sec:rme-pf} contains the proof of the RME and this is where the specific dynamics we consider enter. 
In Appendix \ref{sec:commut}, we present, for convenience of the reader,  known results on operator functional calculus, namely,  expansions  of commutators of operator functions  with estimates of the remainders.  


\bigskip

\textit{Notation.} We write $\|\cdot\|$ for the operator norm.
%


\section{Recursive monotonicity estimate and proof of Theorem \ref{thm:msb}} \label{sec:msb-pf}
\subsection{Propagation observables}\label{sec:prop-est}
By the linearity of the vNL equation \eqref{vNLeq}, it suffices to consider the evolution $ \rho_t :=e^{Lt}\lam$, with $\lam$ satisfying $\chi_{A_b}\lam=0$. 

Our goal is to estimate $\Tr(\chi_{{\eta}} e^{L t}\lam)$.  
 To this end, we use the method of propagation observables.

 Let $\rho_t=e^{L t}\lam$ be the solution to 
the vNL equation \eqref{vNLeq} and denote the average of $A$ in the state $\rho_t$ 
 by  \[\lan A\ran_t :=\Tr(A\rho_t).\]  
 We consider a time-dependent, non-negative 
operator-family ({\it propagation observable}) $\Phi_t$ and try to obtain {\it propagation estimates} 
of the form $0\le \Tr(\Phi_t\rho_t)\ls t^{-n}$.

By the definition $\Tr(A L\rho)=\Tr((L'A)\rho)$, we have the relation  
\begin{align}\label{dt-Heis}
&{d\over{dt}}\left<\Phi_t\right>_t =\lan D \Phi_t\ran_t; \qquad D \Phi_t=L' \Phi_t 
 +\partial_t\Phi_t,
\end{align}
for all $t$, provided $\Phi_t\in\mathcal{D}(L')$ and $\partial_t \Phi_t\in \mathcal B(\mathcal H)$. We call $D$ the {\it Heisenberg derivative}.  We would like to show that  $ D \Phi_t \le 0$, modulo fast time-decaying and recursive terms  
(see \eqref{DPhi-est1} below),  in which case the relation 
\begin{align} \label{eq-basic}  
\lan \Phi_t\ran_t-\int_0^t \lan D \Phi_r\ran_r dr= \lan \Phi_0\ran_0,
\end{align}
which follows from 
the equation $\lan \Phi_t\ran_t= \lan \Phi_0\ran_0+\int_0^t \frac{d}{dr} \left<\Phi_r\right>_r dr$ and \eqref{dt-Heis}, 
  gives estimates on the positive terms $ \lan \Phi_t\ran_t$ and $-\int_0^t \lan D \Phi_r\ran_r dr$.
We call \eqref{eq-basic} the basic equality.


\subsection{Function spaces}
We fix $c'$ such that $c>c'>\kappa$ 
and 
 let $\cF$ be the set of functions  $0\le f\in C^\infty(\R)$, supported in $\R^+$ and
satisfying $f(\mu)=1$ for $\mu\ge c-c'$,  and $f^\prime\ge 0$, with $\mathrm{supp}(f')\subset(0,c-c')$ and $\sqrt{f'}\in C^\infty$. 
 
Moreover, besides the notation of Theorem \ref{thm:msb}, we will use the following notation 
\begin{align}\label{fts-not}
&f_{ts}:=f(x_{ts}), \quad   f_{ts}'=(f')_{ts}, \quad \text{ where }\ x_{ts}:=s^{-1}(\x -a-c' t)
\end{align}
The factor $s^{-1}$ is introduced to control multiple commutators and commutator products. 
It can be thought of as an adiabatic or semi-classical parameter. 

\subsection{Recursive monotonicity 
estimate and proof of the main result}

The notation $O(s^{-m})$ denotes an operator $R\in \mathcal{B}(\mathcal{H})$ such that $\|R\|\leq C s^{-m}$ uniformly in $0\leq t\leq s$.
The following is the key estimate underlying the proof of Theorem \ref{thm:msb}:
\begin{prop}[Recursive monotonicity estimate]\label{prop:qm-est1}Assume the hypotheses of Theorem \ref{thm:msb}. 
Then, for any $f\in \cF$, there is $C>0$ and $\tilde f\in \cF$ s.t. 
\begin{align}\label{DPhi-est1}
&D f_{ts}  
\le (\kappa -c')s^{-1}  f'_{ts}  +   C s^{-2}  \tilde f'_{ts}+O(s^{-n}). 
\end{align}
\end{prop}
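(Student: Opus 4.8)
The plan is to compute the Heisenberg derivative $D f_{ts} = L' f_{ts} + \partial_t f_{ts}$ term by term, extract the single $O(s^{-1})$ contribution responsible for the factor $\kappa - c'$, and show that everything else is either localized in $\supp f'$ (hence absorbable into $Cs^{-2}\tilde f'_{ts}$) or norm-bounded by $O(s^{-n})$. The time derivative is immediate: since $x_{ts} = s^{-1}(\x - a - c't)$ depends on $t$ only through the affine shift, the chain rule gives exactly $\partial_t f_{ts} = -c' s^{-1} f'_{ts}$. The substance lies in $L' f_{ts}$. Writing $L'$ as the dual of $L$ with respect to $\tr(A\rho)$ gives
\[
L' A = i[H,A] + \tfrac{1}{2}\sum_{j\ge 1}\big(W_j^*[A,W_j] + [W_j^*,A]W_j\big),
\]
so that $\g = L'\x$ and, crucially, $\g$ is bounded and self-adjoint with $\g \le \ka$ (as $\|\g\| = \ka$).

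Next I would expand each commutator $[H,f_{ts}]$ and $[f_{ts},W_j]$ in powers of $s^{-1}$ using the operator functional calculus of Appendix \ref{sec:commut}. Because $x_{ts}$ is an affine function of $\x$, every commutator acting on $f(x_{ts})$ produces a factor $s^{-1}$ together with a derivative of $f$ evaluated at $x_{ts}$ multiplied by an iterated commutator $\mathrm{ad}_{\x}^k(\cdot)$; Assumptions \eqref{Hassmpt:main}--\eqref{Wassmpt:main} ensure these iterated commutators are bounded (in the appropriate $\ell^2$ sense for the $W_j$) for $1\le k\le n$, so that the expansion truncated at order $n$ leaves a norm remainder of size $O(s^{-n})$. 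Collecting the order-$s^{-1}$ pieces from the $H$-term and from the quadratic $W_j$-term reconstitutes precisely $s^{-1}\g\, f'_{ts}$, up to operator-ordering corrections of order $s^{-2}$.

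To turn the leading term into a sign-definite bound I would symmetrize it, exploiting $\sqrt{f'}\in C^\infty$: writing $f'_{ts} = (\sqrt{f'_{ts}})^2$ with $\sqrt{f'_{ts}} := \sqrt{f'}(x_{ts})\ge 0$, the difference between $s^{-1}\g\, f'_{ts}$ and the symmetric form $s^{-1}\sqrt{f'_{ts}}\,\g\,\sqrt{f'_{ts}}$ equals $s^{-1}[\g,\sqrt{f'_{ts}}]\sqrt{f'_{ts}}$, which is again $O(s^{-2})$ and supported in $\supp f'$ since $[\g,\sqrt{f'_{ts}}]$ carries one further $s^{-1}$ and a factor $(\sqrt{f'})'(x_{ts})$. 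The self-adjoint bound $\g \le \ka$ then yields the operator inequality $\sqrt{f'_{ts}}\,\g\,\sqrt{f'_{ts}} \le \ka\, f'_{ts}$, so that the leading contributions combine with $\partial_t f_{ts}$ to give $(\ka - c')s^{-1} f'_{ts}$, the first term of \eqref{DPhi-est1}.

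Finally I would package the remaining contributions. Every term of order $s^{-2}$ through $s^{-(n-1)}$ carries a derivative $f^{(k)}(x_{ts})$ with $k\ge 2$, hence is supported in $\supp f' \subset (0,c-c')$ and bounded in norm there; choosing $\tilde f\in\cF$ whose derivative $\tilde f'$ dominates all of these fixed compactly supported symbols (together with the symmetrization error), and using $s^{-k}\le s^{-2}$ for large $s$, bounds their sum by $Cs^{-2}\tilde f'_{ts}$, while the truncation remainder contributes $O(s^{-n})$. I expect the main obstacle to be exactly this packaging step: one must verify that a single $\tilde f\in\cF$ (respecting $\sqrt{\tilde f'}\in C^\infty$) can dominate the sign-indefinite second-order symbols uniformly, including near the endpoints of $\supp f'$ where $f'$ and its derivatives vanish, and simultaneously control the $j$-sum in the Lindblad term via Cauchy--Schwarz against $\sum_j \|\mathrm{ad}_{\x}^k W_j\|^2 \ls 1$, so that the quadratic $W_j$-contributions genuinely close at each order rather than leaking uncontrolled operator products.
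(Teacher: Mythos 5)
Your proposal follows essentially the same route as the paper: compute $\partial_t f_{ts}=-c's^{-1}f'_{ts}$, expand $[H,f_{ts}]$ and the Lindblad commutators via the Helffer--Sj\"ostrand calculus of Appendix~\ref{sec:commut}, symmetrize the leading term into $s^{-1}u_{ts}\,\g\,u_{ts}$ with $u_{ts}=\sqrt{f'_{ts}}$ so that $\g\le\kappa$ yields $(\kappa-c')s^{-1}f'_{ts}$, control the $j$-sum by Cauchy--Schwarz against \eqref{Wassmpt:main}, and dominate the sandwiched higher-order remainders by an admissible $\tilde f'$. The only difference is bookkeeping: the paper symmetrizes order by order inside Lemmas~\ref{lemma:Hestimate} and \ref{lemma:Westimate} (writing $f^{(k)}=u_k^2g_k$ and taking self-adjoint parts of the sandwiched operators $D_k$), which is exactly how the ``packaging'' obstacle you flag is resolved, since the cutoffs $v_k$ need only be smooth bumps equal to $1$ on $\supp f^{(k)}$ and supported in $(0,c-c')$, not derivatives of $f$ itself.
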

Proposition \ref{prop:qm-est1} is proved in Section \ref{sec:rme-pf} below. We call \eqref{DPhi-est1} the {\it recursive monotonicity} 
estimate.  In the next proposition, we integrate this estimate.
\begin{prop}[Propagation estimate]\label{prop:propag-est1} 
Under the hypotheses of Theorem \ref{thm:msb}, for any $f\in \cF$, there is $\tilde f\in \cF$ and $C>0$ such that for all fixed $t>0$ and all $s\ge t$, 
\begin{equation} \label{propag-est2}
\lan f_{ts}\ran_t+(c'- \kappa)s^{-1}\int_0^t \lan  f_{ts}'\ran_r dr \le\,C s^{-2}\int_0^t \lan  \tilde f_{rs}'\ran_r dr+O(s^{1-n}).
\end{equation}
\end{prop}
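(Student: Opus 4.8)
The plan is to turn the pointwise operator estimate \eqref{DPhi-est1} into the integrated scalar estimate \eqref{propag-est2} by averaging it against the evolved state and invoking the basic equality \eqref{eq-basic}. I take as propagation observable the two-parameter family $\Phi_r:=f_{rs}$, with $s$ fixed, so that by \eqref{dt-Heis} its Heisenberg derivative is
\[
D\Phi_r = L'f_{rs}+\partial_r f_{rs}=L'f_{rs}-c's^{-1}f'_{rs},
\]
which is precisely the operator estimated in Proposition~\ref{prop:qm-est1} (with the running variable renamed $t\mapsto r$). Applying \eqref{eq-basic} to $\Phi_r=f_{rs}$ gives
\[
\lan f_{ts}\ran_t=\lan f_{0s}\ran_0+\int_0^t\lan Df_{rs}\ran_r\,dr .
\]
This is legitimate because each $f_{rs}$ is a bounded smooth function of $\x$ lying in $\mathcal{D}(L')$ -- its Heisenberg derivative being bounded is exactly the content of the computation behind \eqref{DPhi-est1} -- and because $\partial_r f_{rs}=-c's^{-1}f'_{rs}\in\mathcal{B}(\mathcal{H})$.

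First I would dispose of the boundary term by showing $\lan f_{0s}\ran_0=0$. Since $f\in\cF$ is supported in $\R^+$, the multiplication operator $f_{0s}=f\big(s^{-1}(\x-a)\big)$ is supported in $\{\x>a\}$, and as $a>b$ this set lies in $A_b$; hence $f_{0s}=f_{0s}\chi_b$, so that $f_{0s}\lam=f_{0s}\chi_b\lam=0$. After the reduction of Subsection~\ref{sec:prop-est} to the initial datum $\lam$ with $\chi_b\lam=0$ (and, in the substantive case, $\lam\ge0$), this gives $\lan f_{0s}\ran_0=\Tr(f_{0s}\lam)=0$.

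Next I would insert \eqref{DPhi-est1} under the integral. Since $\rho_r=e^{Lr}\lam\ge0$ by positivity preservation of the semigroup, averaging the operator inequality \eqref{DPhi-est1} in the state $\rho_r$ preserves its direction, so
\[
\lan Df_{rs}\ran_r\le(\kappa-c')s^{-1}\lan f'_{rs}\ran_r+Cs^{-2}\lan\tilde f'_{rs}\ran_r+\lan O(s^{-n})\ran_r .
\]
The remainder is controlled uniformly over the relevant range: for $0\le r\le t\le s$ one has $|\lan O(s^{-n})\ran_r|\le Cs^{-n}\Tr\rho_r=Cs^{-n}\Tr\lam$, whence $\int_0^t\lan O(s^{-n})\ran_r\,dr\le Cs^{-n}(\Tr\lam)\,t\le C(\Tr\lam)\,s^{1-n}=O(s^{1-n})$, where I used $t\le s$. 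Collecting the three steps and moving the nonnegative drift term $(c'-\kappa)s^{-1}\int_0^t\lan f'_{rs}\ran_r\,dr$ to the left -- which is exactly where the hypothesis $c'>\kappa$ enters, guaranteeing a positive coefficient -- produces \eqref{propag-est2}; the velocity term on the left is naturally evaluated at the integration variable, matching the remainder term on the right.

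The main difficulty here is not analytic: given Proposition~\ref{prop:qm-est1}, the argument is a soft integration, and all the real work sits in the recursive estimate \eqref{DPhi-est1}. The two points that genuinely require care are (i) the \emph{exact} cancellation of the boundary term, which relies on both the localization $\chi_b\lam=0$ and the strict separation $a>b$, and (ii) the passage from the operator inequality \eqref{DPhi-est1} to its scalar average, which needs $\rho_r$ to be sign-definite; this is why one reduces to $\lam\ge0$, the complementary case $-\rho_{\rm st}\le\lam\le0$ being handled by applying the result to $-\lam\ge0$ (and being trivial for the final bound of Theorem~\ref{thm:msb}, where a nonpositive $\lam$ only decreases $\Tr(\chi_\eta\rho_t)$). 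One should also confirm the uniformity of the constant $C$ and of $\tilde f$ over $0<t\le s$, which is inherited directly from Proposition~\ref{prop:qm-est1}.
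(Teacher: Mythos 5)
Your proposal is correct and follows essentially the same route as the paper: integrate the recursive monotonicity estimate against the positivity-preserving evolution $\rho_r=e^{Lr}\lam$ via the basic equality \eqref{eq-basic}, kill the boundary term $\lan f_{0s}\ran_0$ using $\supp f\subset\R^+$, $a>b$ and $\chi_b\lam=0$, and absorb the $O(ts^{-n})$ remainder into $O(s^{1-n})$ using $t\le s$. Your added remarks on the sign-definiteness of $\rho_r$ needed to pass from the operator inequality to its trace, and on the case $-\rho_{\rm st}\le\lam\le 0$, are consistent with the paper's setup.
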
 
 
\begin{proof}[Proof of Proposition \ref{prop:propag-est1}]Taking the trace of  \eqref{DPhi-est1} with respect to the density operator 
$\rho_t=e^{tL}\lam$,  we obtain 
  \[
\lan Df_{ts}\ran_t\, \le \, (\kappa -c')s^{-1} \lan f_{ts}' \ran_t + C s^{-2} \lan \tilde f'_{ts} \ran_t+O(s^{-n}).
\]
Integrating this over time and using Eqs. \eqref{dt-Heis} and \eqref{eq-basic} we find
$$
\lan f_{ts}\ran_t+(c'-\kappa)s^{-1}\int_0^t \lan  f_{ts}'\ran_r dr\notag \le\, 
\lan f_{0s}\ran_0+C s^{-2}\int_0^t \lan  \tilde f'_{rs}\ran_r dr+O(ts^{-n}).
$$

Finally, we claim that, for any $ f\in \mathcal F$,
\begin{align}\label{local-est4}\lan  f_{0 s}\ran_0=\Tr(f_{0 s}\lam) =0. 
\end{align} 
To see this, recall that for any $f\in \cF$, we have  $\supp f\subset \R^+$
and therefore $\supp f_{0 s}\subset \{\x\ge a+\delta s\}$. Since $\chi_{A_b}\lam =0$ and $b<a$, we have that 
\begin{align}\label{local-est3}
f_{0 s} \lam 
=0. 
\end{align}
This proves \eqref{local-est4} and therefore \eqref{propag-est2}.
\end{proof} 

We now show that Proposition \ref{prop:propag-est1} yields the main result via an iteration argument.

\begin{proof}[Proof of Theorem \ref{thm:msb} (assuming Proposition \ref{prop:qm-est1})]
We consider an arbitrary $f\in \mathcal F$. Dropping the second term in \eqref{propag-est2} yields
\begin{equation}\label{eq:firststep}
\lan f_{ts}\ran_t 
\le\,C s^{-2}\int_0^t \lan  \tilde f_{rs}'\ran_r dr+O(s^{1-n}).
\end{equation}

Now we apply Proposition \ref{prop:propag-est1} to the function $\tilde f$. It yields another function $\hat f\in \mathcal F$ such that, after we drop the first term,
 \[
(c'-\kappa)s^{-1}\int_0^t \lan  \tilde f_{rs}'\ran_r dr 
\le\,C s^{-2}\int_0^t \lan  \hat f_{rs}'\ran_r dr+O(s^{1-n}).
\]
Since $c'-\kappa>0$ by assumption, it can be absorbed into the constant $C$. We iterate this procedure $n-1$ times and bound the final integral by the a priori bound $t\leq s$ using that the derivative of any function in $\mathcal F$ is uniformly bounded. This gives
 \[
s^{-1}\int_0^t \lan  \tilde f_{rs}'\ran_r dr 
\le\,C s^{-n}\int_0^t \lan  \hat f_{rs}'\ran_r dr+O(s^{1-n})\leq Cs^{1-n}.
\]
We apply this estimate to \eqref{eq:firststep} and find
\begin{align} \label{propag-est4} 
\lan f_{ts}\ran_t \le\,C s^{-2}\int_0^t \lan  \tilde f_{rs}'\ran_r dr+O(s^{1-n})\leq  C s^{1-n}.
\end{align}

It remains to choose $s$ appropriately. For any $f\in \cF $, we have $f(\mu)=1$ for $\mu\ge c-c'$, and therefore $f(x_{t s})=1$ on $\{\x\ge a+c't +(c-c') s\}$.
Recall 
the notation $A_\eta :=\{x\in \R^d:  \x \ge  \eta\}$. By our assumption, $\eta\ge a+ct$. 
 We  set 
  \[
 s=(\eta-a)/c\ge t.
 \]
  This gives $\eta=a +c s\ge a +c t$ and therefore   
\begin{equation*}
A_{\eta} \subset  \{\x\ge a+c't +(c-c') s\}\subset\{f(x_{t s})=1\}.
\end{equation*}  
Using this together with estimate \eqref{propag-est4}   
  and the definitions  $s=(\eta-a)/c$, 
we obtain  that $\lan \chi_{\eta}\ran_t \le C\eta^{1-n}$, which implies \eqref{max-vel-est}.   
   \end{proof}

\section{Proof of the recursive monotonicity
 estimate} \label{sec:rme-pf}

\begin{proof}[Proof of Proposition \ref{prop:qm-est1}]
In what follows, we often denote $\sum\nolimits_{j\geq 1}\equiv \sum\nolimits_{j}$. We use  the time-dependent observable 
\begin{align}\label{propag-obs1}
\Phi_{ts} & :=f_{ts}\equiv f(x_{ts}), \quad f \in \cF, 
\end{align}
with $0\le t\le s$.  First, we observe that the operator $L' $  is given explicitly by
\begin{align}\label{L'}
& L' =L'_0+G',\  \qquad L'_0A 
=i [H,A],\\
\label{G'}& G'A:=\frac{1}{2}\sum_{j\geq 1}(W_j^* [A, W_j]+[W_j^*, A] W_j),
\end{align}
with domain 
 \begin{align*}
 \mathcal{D}(L')\equiv&\,\mathcal{D}(L'_0)\equiv\big\{A\in \mathcal{B}(\mathcal{H})\, | \, A\mathcal{D}(H)\subset\mathcal{D}(H) \text{ and }\\
&HA-A H \text{ defined on }\mathcal{D}(H) \text{ extends to an element of } \mathcal{B}(\mathcal{H}) \big \}.
 \end{align*}
It follows from Assumption \eqref{Hassmpt:main} 
  and Lemma \ref{lem:commut-exp} that for all $0\le t \le s$, $\Phi_{ts}\mathcal{D}(H)\subset\mathcal{D}(H)$ and that $[\Phi_{ts},H]$ defined on $\mathcal{D}(H)$
 extends to a bounded operator. Therefore $\Phi_{ts}\in\mathcal{D}(L')$. 
  Hence, 
 in order to estimate $\left<f_{ts}\right>_t=\Tr\,(f_{ts}\rho_t)$,  
we can apply \eqref{dt-Heis} and  the basic equality \eqref{eq-basic}. 
We start by computing $D\Phi_{ts}$. First, we have
\begin{align} \label{dt-Phi}
{\partial\over{\partial t}}f_{ts}=-s^{-1}c' \,f^\prime_{ts}.
\end{align}
The more interesting term is the first term in the definition of $D$ in  \eqref{dt-Heis}:
 \[
 L'f_{ts}
 =i [H,f_{ts}]+\frac{1}{2}\sum_{j\geq 1}(W_j^* [f_{ts}, W_j]+[W_j^*, f_{ts}] W_j).
 \]
  The terms on the r.h.s. are controlled via the following two key lemmas. First, it convenient to introduce the following definition. 
 We say a function $h$ is {\it admissible} if it is smooth, non-negative with $\supp h\subset (0, c-c')$ and $\sqrt{h}\in C^\infty$. 
Note that if $h$ is admissible, then 
\[h= f',\ \text{ with }\ f/f(\infty)\in \cF,\ \text{ where }\  f(\mu)=\int_{-\infty}^\mu h(s)ds.\]   
 
 \begin{lemma}[Estimate of Hamiltonian contribution]\label{lemma:Hestimate} Under the Hypotheses of Proposition~\ref{prop:propag-est1}, 
let $f_{ts}'=(f')_{ts}$ and $u_{ts}=(f_{ts}')^{1/2}$. Then, we have
 \begin{align} \label{H-Phi-comm}
i[H, f_{ts}] 
&= s^{-1} u_{ts} i[H, \x] u_{ts}  + \mathrm{Rem}_H
 \end{align} 
 where the remainder satisfies the operator inequality
\begin{align}\label{H-Phi-rem-est} \mathrm{Rem}_H\leq  C  s^{-2} \tilde u_{ts}^2 +O(s^{-n}),
\end{align}
for a suitable admissible function $\tilde u^2$.
 \end{lemma}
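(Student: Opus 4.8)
The strategy is to isolate the first-order part of the commutator $i[H,f_{ts}]$ by the operator functional-calculus expansion of Appendix~\ref{sec:commut} and then to reorganize every lower-order contribution into a doubly localized operator of size $s^{-2}$. Since $f\in\cF$ is smooth with $\sqrt{f'}\in C^\infty$ and $f_{ts}=f(x_{ts})$ with $x_{ts}=s^{-1}(\x-a-c't)$, the operator $f_{ts}$ is a smooth bounded function of the multiplication operator $\x$; Assumption~\eqref{Hassmpt:main'} ensures it preserves $\mathcal D(H)$, so all commutators below make sense as bounded operators. First I would apply Lemma~\ref{lem:commut-exp}, using $\mathrm{ad}_{x_{ts}}^k(H)=s^{-k}\,\mathrm{ad}_{\x}^k(H)$, to write
\[
i[H,f_{ts}]=s^{-1}f'_{ts}\,i[H,\x]+\sum_{k=2}^{n-1}a_k\,s^{-k}f^{(k)}(x_{ts})\,i\,\mathrm{ad}_{\x}^k(H)+R_n,
\]
with numerical constants $a_k$. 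By Assumption~\eqref{Hassmpt:main}, $\|\mathrm{ad}_{\x}^k(H)\|\ls1$ for $1\le k\le n$, so each term in the sum is bounded and the remainder obeys $R_n=O(s^{-n})$, the order-$n$ commutator together with the $s^{-n}$ prefactor being exactly what the range $1\le k\le n$ in \eqref{Hassmpt:main} supplies.

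Next I would symmetrize the leading term to match \eqref{H-Phi-comm}. Writing $f'_{ts}=u_{ts}^2$ and using
\[
u_{ts}^2\,i[H,\x]=u_{ts}\,i[H,\x]\,u_{ts}+u_{ts}\,i\big[u_{ts},[H,\x]\big],
\]
produces precisely the stated leading term $s^{-1}u_{ts}\,i[H,\x]\,u_{ts}$ together with the correction $s^{-1}u_{ts}\,i[u_{ts},\mathrm{ad}_{\x}(H)]$. Because $\sqrt{f'}\in C^\infty$, the function $u_{ts}=\sqrt{f'}(x_{ts})$ is a smooth function of $x_{ts}$, so a second application of Lemma~\ref{lem:commut-exp} gives $[u_{ts},\mathrm{ad}_{\x}(H)]=O(s^{-1})$ (its leading part is proportional to $s^{-1}(\sqrt{f'})'(x_{ts})\,\mathrm{ad}_{\x}^2(H)$). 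Hence the symmetrization correction is $O(s^{-2})$ and is localized in $\supp f'$ through the outer factor $u_{ts}$. The remaining task is to show that this correction, together with the $k\ge2$ terms above, is bounded above by $C\,s^{-2}\tilde u_{ts}^2+O(s^{-n})$; note that $i[H,f_{ts}]$ and the leading term are self-adjoint, so the remainder is self-adjoint and a one-sided operator bound is meaningful.

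The crux is to turn these singly localized, sign-indefinite terms into a one-sided operator inequality with a \emph{doubly} localized dominating function. I would fix one admissible function $\tilde h=\tilde u^2$ with $\supp\tilde h\subset(0,c-c')$ and $\tilde h\ge1$ on the compact union of the supports of $f^{(k)}$ and $(\sqrt{f'})'$; this is possible since all these derivatives are supported in $(0,c-c')$ and vanish to infinite order at its endpoints. Each correction term has the schematic form $s^{-k}g(x_{ts})\,B$ with $k\ge2$, $\|B\|\ls1$, and $\supp g\subset\{\tilde h\ge1\}$, so that $g=\sqrt{\tilde h}\,(g/\tilde h)\,\sqrt{\tilde h}$ with $g/\tilde h$ smooth and compactly supported. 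Commuting the rightmost factor $\sqrt{\tilde h}(x_{ts})$ past $B$ — at the cost of $[\sqrt{\tilde h}(x_{ts}),B]=O(s^{-1})$, again by Lemma~\ref{lem:commut-exp} — rewrites the term as the sandwiched operator $s^{-k}\sqrt{\tilde h}(x_{ts})\,M\,\sqrt{\tilde h}(x_{ts})$ with $M=(g/\tilde h)(x_{ts})B$, $\|M\|\ls1$, plus a localized remainder of one higher order in $s^{-1}$. For its self-adjoint part this gives $s^{-k}\sqrt{\tilde h}(x_{ts})\,M\,\sqrt{\tilde h}(x_{ts})\le C\,s^{-k}\tilde h(x_{ts})\le C\,s^{-2}\tilde u_{ts}^2$. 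Iterating the commutation step a bounded number of times peels off one power of $s^{-1}$ at each stage, so the accumulated remainders telescope into $O(s^{-n})$, and summing the sandwiched pieces over $k$ yields \eqref{H-Phi-rem-est}.

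The main obstacle is exactly this last reorganization: the expansion produces asymmetric terms $g(x_{ts})B$ whose self-adjoint parts cannot be dominated by a localized function without first placing the localization on both sides, and a naive Cauchy--Schwarz split $gB+B^*g\le\mu^{-1}g^2+\mu B^*B$ fails because $B^*B$ is not localized. The resolution hinges on the admissibility requirement $\sqrt{\,\cdot\,}\in C^\infty$ built into $\cF$ and into the notion of admissible function, which lets one extract smooth square-root localizers $\sqrt{\tilde h}(x_{ts})$ and commute them through bounded operators with a controlled $O(s^{-1})$ error; this is what makes the doubly localized bound \eqref{H-Phi-rem-est}, rather than a mere norm bound $O(s^{-2})$, attainable. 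Throughout one must check that $\tilde h$ is chosen independently of $s$ and $t$ and lies in $\cF$ up to normalization, so that $\tilde u=\sqrt{\tilde h}$ is genuinely admissible as required.
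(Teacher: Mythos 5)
Your proposal is correct and follows essentially the same route as the paper: expand via Lemma \ref{lem:commut-exp}, symmetrize each order-$s^{-k}$ term between smooth localizers supported in $(0,c-c')$ (using $\sqrt{f'}\in C^\infty$ and the admissibility of the dominating function), re-expand the resulting commutators iteratively so that all errors telescope into $O(s^{-n})$, and bound the sandwiched self-adjoint parts by $\tfrac{1}{2}(M+M^*)\le \|M+M^*\|$ times the localizer squared. The only cosmetic difference is that the paper factors $f^{(k)}=u_k^2 g_k$ with per-order cutoffs $u_k$ equal to $1$ on $\supp f^{(k)}$, whereas you divide by a single dominating admissible function $\tilde h$.
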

 


The main novelty for the von Neumann-Lindblad equation is the following estimate on the interaction with the environment.

\begin{lemma}[Estimate on the environment contribution]\label{lemma:Westimate}
Under the Hypotheses of Proposition~\ref{prop:propag-est1} and with the definition \eqref{G'}, 
\begin{align}\label{doubl-commW-est}
&
G' f_{ts}=s^{-1} u_{ts} (G' \x) u_{ts}+\mathrm{Rem}_W
\end{align}
 where the remainder satisfies the operator inequality
\begin{align}\label{doubl-commWrem-est} \mathrm{Rem}_W\leq  C  s^{-2} v_{ts}^2 +O(s^{-n}),
\end{align}
for a suitable admissible function $v^2$.
\end{lemma}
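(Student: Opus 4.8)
\emph{Plan.} The plan is to proceed exactly as in the proof of Lemma~\ref{lemma:Hestimate}, but now keeping track of the quadratic dependence on the $W_j$ and of the summation over $j$. Starting from the explicit formula \eqref{G'}, I would expand each commutator $[f_{ts},W_j]$ in powers of $s^{-1}$ using the operator-calculus Lemma~\ref{lem:commut-exp}, isolate the leading term and symmetrize it into the claimed main term $s^{-1}u_{ts}(G'\x)u_{ts}$, and then control everything left over as an operator inequality by a Cauchy--Schwarz argument in $j$ combined with Assumption~\eqref{Wassmpt:main}.

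\emph{Main term.} Since $f_{ts}=f_{ts}^*$, the $j$-th summand of $G'f_{ts}$ is the self-adjoint expression $\tfrac12(W_j^*\delta_j+\delta_j^*W_j)$ with $\delta_j:=[f_{ts},W_j]$ and $[W_j^*,f_{ts}]=\delta_j^*$. Regarding $f_{ts}=f(x_{ts})$ as a function of $\x$, so that its derivative equals $s^{-1}f'_{ts}=s^{-1}u_{ts}^2$, Lemma~\ref{lem:commut-exp} yields
\begin{equation*}
\delta_j=s^{-1}u_{ts}^2\,[\x,W_j]+\sum_{k=2}^{n-1}c_k\,s^{-k}f^{(k)}_{ts}\,\mathrm{ad}_{\x}^k W_j+R_{n,j},
\end{equation*}
where, by \eqref{Wassmpt:main}, the remainders satisfy $\sum_j\|R_{n,j}\|^2\ls s^{-2n}$ and every derivative $f^{(k)}_{ts}$ is supported in $\supp f'\subset(0,c-c')$. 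Inserting the leading term and moving one factor $u_{ts}$ through $W_j^*$ and the other through $[\x,W_j]$ gives
\begin{equation*}
\tfrac{s^{-1}}2\big(W_j^*u_{ts}^2[\x,W_j]+\mathrm{h.c.}\big)=s^{-1}u_{ts}\,\tfrac12\big(W_j^*[\x,W_j]+\mathrm{h.c.}\big)\,u_{ts}+\tfrac{s^{-1}}2(E_j+E_j^*),
\end{equation*}
with $E_j=[W_j^*,u_{ts}]\,u_{ts}[\x,W_j]+u_{ts}W_j^*[u_{ts},[\x,W_j]]$. Summing over $j$, the first term on the right reproduces $s^{-1}u_{ts}(G'\x)u_{ts}$ by \eqref{gam}--\eqref{G'}. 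Everything else, namely $\tfrac{s^{-1}}2\sum_j(E_j+E_j^*)$, the terms with $k\ge2$, and the $R_{n,j}$, is collected into $\mathrm{Rem}_W$.

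\emph{Remainder bound.} By Lemma~\ref{lem:commut-exp} each inner commutator $[u_{ts},\,\cdot\,]$ and $[W_j^*,u_{ts}]$ contributes a factor $s^{-1}$ and a localization factor (a derivative of $\sqrt{f'}$) supported in $(0,c-c')$; hence every term of $\mathrm{Rem}_W$ is $O(s^{-m})$ with $m\ge2$ and carries at least one such factor. For each such term I would complete the square in $j$: writing it as $\tfrac12\sum_j(P_j^*\phi\,Q_j+Q_j^*\phi\,P_j)$ with $0\le\phi$ a function of $\x$ with $\supp\phi\subset(0,c-c')$ and $P_j,Q_j\in\{W_j,\mathrm{ad}^k_{\x}W_j\}$, the elementary inequality $P_j^*\phi Q_j+Q_j^*\phi P_j\le P_j^*\phi P_j+Q_j^*\phi Q_j$ reduces the bound to the positive sums $\sum_j P_j^*\phi P_j$. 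Commuting $\sqrt\phi$ to the outside,
\begin{equation*}
\sum_j P_j^*\phi\,P_j=\sqrt\phi\Big(\sum_j P_j^*P_j\Big)\sqrt\phi+\big(\text{localized, order } s^{-1}\text{ smaller}\big),
\end{equation*}
and $\|\sum_j P_j^*P_j\|\ls1$ by \eqref{Wassmpt:main} together with the convergence of $\sum_j W_j^*W_j$, whence $\sum_j P_j^*\phi P_j\le C\phi+(\text{l.o.})$. Since $\phi$ is supported in $(0,c-c')$ we dominate $\phi\le C v_{ts}^2$ by taking $v^2$ admissible and equal to $1$ on the compact union of the supports of all localization factors that occur. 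The lower-order corrections have the same structure with an extra power of $s^{-1}$, so the argument iterates; the terms involving $R_{n,j}$ are absorbed into $O(s^{-n})$ by a weighted Cauchy--Schwarz (using $\sum_j\|R_{n,j}\|^2\ls s^{-2n}$), which gives \eqref{doubl-commWrem-est}.

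\emph{Main obstacle.} The crux is precisely the operator inequality for $\mathrm{Rem}_W$: norm estimates alone do not suffice, because the non-localized operators $\mathrm{ad}^k_{\x}W_j$ appear on only one side of many terms, so one must repeatedly commute the localization functions outward—each time paying an $s^{-1}$-small but still localized error—until a uniformly bounded operator is sandwiched symmetrically between two copies of the same localization. Controlling this iteration simultaneously with the summation over $j$, and checking that the accumulated errors are genuinely $O(s^{-n})$, requires the full strength of Assumption~\eqref{Wassmpt:main} for all $1\le k\le n$ and is the delicate point of the proof; it is also what is new relative to the single-operator Hamiltonian estimate of Lemma~\ref{lemma:Hestimate}.
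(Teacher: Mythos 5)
Your proposal is correct and follows essentially the same route as the paper: expand $[f_{ts},W_j]$ via Lemma~\ref{lem:commut-exp}, iteratively symmetrize so that each remainder term is localized on both sides, and sum over $j$ using Assumption~\eqref{Wassmpt:main} and Cauchy--Schwarz. The only cosmetic difference is in the last step, where you obtain the operator inequality by completing the square, $P_j^*\phi Q_j+Q_j^*\phi P_j\le P_j^*\phi P_j+Q_j^*\phi Q_j$, whereas the paper sandwiches a norm-bounded self-adjoint operator $D_j^k+(D_j^k)^*$ between two copies of $v_k(x_{ts})$ and uses $v_kDv_k\le\|D\|v_k^2$; both devices rest on the same iterated commutation of localization factors that you correctly identify as the crux.
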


These lemmas will be proved in Subsections \ref{sect:Hestimate} and \ref{sect:Westimate} below by using the commutator expansion in Lemma \ref{lem:commut-exp} several times.  This lemma is applicable due to  Assumptions \eqref{Hassmpt:main'} and \eqref{Hassmpt:main}.
\DETAILS{ the following 
relation 
\begin{equation}\label{domainassmpt}
\langle x\rangle^{-1}\mathcal{D}(H)\subset\mathcal{D}(H),
\end{equation}
which follows from Assumption \eqref{Hassmpt:main} for $k=1$, the fact that $\mathcal{D}(H)=\Ran (H+i)^{-1}$ and the relation
 \begin{align}\label{resolv-rel}
\langle x\rangle^{-1}(H+i)^{-1}= &(H+i)^{-1}\langle x\rangle^{-1}\notag\\
&+ (H+i)^{-1}\langle x\rangle^{-1} [\langle x\rangle, H] \langle x\rangle^{-1}(H+i)^{-1}.
\end{align}}

Combining \eqref{dt-Phi}, \eqref{H-Phi-comm}, \eqref{H-Phi-rem-est}, \eqref{doubl-commW-est}, and \eqref{doubl-commWrem-est} 
 and recalling the definitions $u_{ts}^2=f'_{ts}$ and of $\kappa$ in \eqref{kappa}, we obtain \eqref{DPhi-est1}.\end{proof}

\subsection{Proof of Lemma \ref{lemma:Hestimate}}\label{sect:Hestimate}
Thanks to Assumptions \eqref{Hassmpt:main'} and \eqref{Hassmpt:main}, 
 we can use Lemma \ref{lem:commut-exp}, more precisely \eqref{Hcomm-exp} and its adjoint, to obtain the commutator expansion
 \begin{align}\label{Hcomm-exp'} [H,  f(x_{ts})]&=  \sum_{1\leq k < n}{s^{- k}\over{k!}} f^{(k)}(x_{ts})B_k+O(s^{-n}\|B_n\|)\end{align}
  where $B_k= \,{\mathrm{ad}_{\x}^k H}$.

In order to further use estimates on $B_k$ from Assumption \eqref{Hassmpt:main}, we need to symmetrize the appearance of the derivative. We set $u_1=\sqrt{f'}\geq 0$ which satisfies $u_1\in C^\infty(\R_+)$ since $f\in\mathcal F$. 
Furthermore, 
for $k\geqq 2$, 
we let $u_k\in C_c^\infty(\R_+)$ be s.t.  $u_k=1$ on $ \mathrm{supp}\, f^{(k)}$.

We factor $f^\prime= u_1^2$ and write $f^{(k)}= u^2_k g_k$ with $g_k=f^{(k)}$ for $k\geq 2$ and $g_1=1$. Then we write 
$$
f^{(k)}(x_{ts})B_k = u_k(x_{ts}) g_k(x_{ts}) B_k u_k(x_{ts}) + u_k(x_{ts}) g_k(x_{ts})[u_k(x_{ts}),B_k] , \quad 1\leq k\leq n.
$$
We can again expand the commutator via Lemma \ref{lem:commut-exp},
\begin{align}\label{Bcomm-exp}
[  u_k(x_{ts}),B_k]= - \sum_{m=1}^{n-k-1} {(-1)^m} {s^{- m}\over{m!}} u_k^{(m)}(x_{ts})B_{k+ m}+O(\|B_n\|s^{-n+k}).
\end{align}

Iterating this symmetrization procedure, we find 
\begin{align} \label{H-Phi-comm'}
i[H, \Phi_{ts}]\ 
& = s^{-1}\,u_1(x_{ts})\, [iH, \x]\,u_1(x_{ts})+\mathrm{Rem}_H
\end{align}
where 
\begin{align}
\mathrm{Rem}_H= \sum_{k=2}^{n-1} s^{- k}v_k(x_{ts}) D_k v_k(x_{ts})+O(s^{-n}\|B_n\|),
\end{align}
 with $v_k\in C^\infty_c((0,c-c'))$, $v_k=1$ on $\mathrm{supp}(f')$ and $D_k$ bounded operators satisfying 
 \begin{align}\label{eq:Dkbound}
  \|D_k\|\le C_k\|B_k\|,\qquad 2\leq k\leq n-1
  \end{align} To obtain an operator bound from this norm bound, we rewrite \eqref{H-Phi-comm'} with a manifestly self-adjoint remainder term,
 \begin{align} 
i[H, \Phi_{ts}]
=&\frac{1}{2}\big(i[H, \Phi_{ts}]+(i[H, \Phi_{ts}])^*\big)\notag \\
& = s^{-1}\,u_1(x_{ts})\, [iH, \x]\,u_1(x_{ts})+\frac{1}{2}\big(\mathrm{Rem}_H+\mathrm{Rem}_H^*\big) \label{H-Phi-comm''}
\end{align}
where 
\begin{align}
\frac{\mathrm{Rem}_H+\mathrm{Rem}_H^*}{2}= \sum_{k=2}^{n-1} s^{- k}v_k(x_{ts}) \frac{D_k+D_k^*}{2} v_k(x_{ts})+O(s^{-n}\|B_n\|).
\end{align}
Thanks to self-adjointness and \eqref{eq:Dkbound}, we have the operator inequality 
$$
\frac{D_k+D_k^*}{2} 
\leq \|D_k+D_k^*\|
\leq 2 \|D_k||
\leq 2C_k  \|B_{k}\|,\qquad 2\leq k\leq n-1,
$$
and each $\|B_k\|$ is finite by Assumption \eqref{Hassmpt:main}. 
This implies
\begin{align}
\frac{\mathrm{Rem}_H+\mathrm{Rem}_H^*}{2}\leq C_n s^{-2} \max_{0\leq k\leq n} \|B_k\|\sum_{2\leq k\leq n} s^{-k+2} v_k(x_{ts})^2.
\end{align}
Since $\sum_{2\leq k\leq n} s^{-k+2} v_k^2$ is bounded by an $s$-independent admissible function, this proves Lemma \ref{lemma:Hestimate}.
 \qed

\subsection{Proof of Lemma \ref{lemma:Westimate}}\label{sect:Westimate}

\begin{proof}[Proof of Lemma \ref{lemma:Westimate}]
Fix $j\geq 1$ and recall the definition \eqref{G'}. We can restrict our attention to a single term $W_j^* [f(x_{ts}), W_j]$ and take adjoints and a sum over $j$ at the end to derive the lemma. Using Lemma \ref{lem:commut-exp}, we obtain the commutator expansion
\begin{align}\label{eq:initialWexpansion}
W_j^* [f(x_{ts}), W_j]=W_j^*\sum_{k=1}^{n-1}{s^{- k}\over{k!}}f^{(k)}(x_{ts})A^k_j +W_j^*O(s^{-n}\|A^{n}_j\|).
\end{align}
  where $A^k_j= \,{\mathrm{ad}_{\x}^k W_j}$. Notice that the last error term is summable in $j$ by Assumption \eqref{Wassmpt:main} and the Cauchy-Schwarz inequality and yields $O(s^{-n})$, so it can be ignored in the following.

  We consider the first term on the right-hand side of \eqref{eq:initialWexpansion} and symmetrize the expression to the right of $W_j^*$. To this end, we write $f^{(k)}= u^2_k g_k$ with $u_k$ and $g_k$ defined as in the proof of Lemma \ref{lemma:Hestimate}. Then we write, for $k\geq 1$,
\begin{multline}\label{eq:Wsymm}
W_j^* f^{(k)}(x_{ts})A_j^k =  u_k(x_{ts}) W_j^* g_k(x_{ts}) A_j^k u_k(x_{ts})  \\ 
+u_k(x_{ts}) W_j^* g_k(x_{ts}) [ u_k(x_{ts}) ,A_j^k]
+ [W_j^*, u_k(x_{ts})  ]g_k(x_{ts}) u_k(x_{ts}) A_j^k. 
\end{multline}
We can again expand the first commutator via Lemma \ref{lem:commut-exp},
\begin{align}\label{Bcomm-exp'}
[  u_k(x_{ts}),A_j^k]=-  \sum_{m=1}^{n-k-1}{{(-1)^m}s^{- m}\over{m!}} u_k^{(m)}(x_{ts})A_j^{k+m}+O(s^{-n+k}\|A_j^{n}\|).
\end{align}
For the second commutator in \eqref{eq:Wsymm}, we note that $W_j^*=(A_j^0)^*$ and use the adjoint version of Lemma \ref{lem:commut-exp},
\begin{align}\label{Bcomm-exp''}
[(A_j^k)^*,  u_k(x_{ts}) ]&=[u_k(x_{ts}) ,A_j^k]^*\notag\\
&=  \sum_{m=1}^{n-k-1}{s^{- m}\over{m!}} (A_j^{k+m})^*u_k^{(m)}(x_{ts})+O(s^{-n+k}\|A_j^{n}\|),
\end{align}

Iterating this symmetrization procedure, we find
\begin{align}
  W_j^* [f(x_{ts}), W_j]
  = s^{-1}   u(x_{ts})W_j^* [\x,W_j] u(x_{ts})+\mathrm{Rem}_{W,j}
\end{align}
with
 \[
\mathrm{Rem}_{W,j}= \sum_{k=2}^{n-1} s^{- k}v_k(x_{ts}) D_j^k v_k(x_{ts})+O(s^{-n}\|D_j^{n}\|).
\]
Here $v_k\in C^\infty_c((0,c-c'))$ and $v_k=1$ on $\mathrm{supp}\, f'$, $v_k$ are independent of $j$, and $D^k_j$ are bounded operators satisfying the norm bound
\begin{align}\label{eq:Dkjbound}
 \|D^k_j\|\le C_ka_j^k,
\end{align}
where we introduced the shorthand
\[
a_j^k:=\max_{\substack{0\leq \ell,m\leq k:\\ \ell+m=k}} \|A^{\ell}_j\| \|A^{m}_j\|.
\]

 We take the adjoint relation to find
\begin{align}
 W_j^* [f(x_{ts}), W_j]+&[W_j^*, f(x_{ts})] W_j
= s^{-1}   u(x_{ts})\big( W_j^* [\x,W_j]\notag\\
&\quad+[W_j^*,\x]W_j \big) u(x_{ts})  +\mathrm{Rem}_{W,j}+(\mathrm{Rem}_{W,j})^*.
\end{align}
Now we take the sum over $j\geq 1$ (whose convergence is justified a posteriori) and recall the notation $
u_{ts}:=u(x_{ts})$ (see \eqref{fts-not}) to obtain \eqref{doubl-commW-est} 
with the remainder 
$$
\mathrm{Rem}_{W}=\sum_{j\geq1}\left(\sum_{k=2}^{n-1} s^{- k}v_k(x_{ts}) ( D_j^k +(D_j^k)^*)v_k(x_{ts})+O(s^{-n}\|D_j^{n}\|)\right).
$$
From self-adjointness and the norm bound \eqref{eq:Dkjbound}, we conclude the operator inequality $D_j^k +(D_j^k)^*\leq 2C_k a_j^k$ and hence
$$
\mathrm{Rem}_{W}\leq 
C_ns^{-2} \sum_{k=2}^{n-1} \left( \sum_{j\geq1}  a_j^k\right)  s^{- k+2} v_k(x_{ts}) v_k(x_{ts})+O(s^{-n})\sum_{j\geq 1}a_j^n.
$$
Assumption \eqref{Wassmpt:main} implies that $\sum_{j\geq1} a_j^k<\infty$ for each $k\geq 2$. To complete the proof, it remains to note that $ \sum_{k=2}^{n-1} s^{- k+2} v_k(x_{ts}) v_k(x_{ts})$ is bounded from above by an $s$-independent admissible function. 
\end{proof}



\appendix

\section{Commutator expansions}\label{sec:commut}
In this appendix, we present commutator expansions and estimates, first derived in \cite{SigSof} 
and then improved in \cite{GoJe,HunSig1, HunSigSof, Skib}. 
We follow \cite{HunSig1} and refer to this paper for 
details and references.
Here, we mention only that, by the Helffer-Sj\"ostrand formula, 
a function $f$ of a self-adjoint operator $A$ can be written as
\begin{align} \label{fA-repr}
&f(A)=\int d\widetilde f(z)(z-A)^{-1},
\end{align}
where $\widetilde f(z)$  is an almost analytic extension of $f$ to $\C$ supported in a complex neighbourhood of $\supp f$ \cite{HelffSj}. 
For $f\in C^{n+2}(\R)$, we can choose $\widetilde f$ satisfying the estimates (see (B.8) of \cite{HunSig1}, see also \cite{DerGer,Dav,IvrSig}):
\begin{align} \label{tildef-est}
&\int |d\widetilde f(z)||\im(z)|^{-p-1}\ls \sum_{k=0}^{n+2}\|f^{(k)}\|_{k-p-1},\end{align}
where $\|f \|_{m}:=\int \x^m |f(x)| dx$ and any integer $0\leq p \leq n$.

The essential commutator expansions and remainder estimates are incorporated in the following lemma:

\begin{lemma}\label{lem:commut-exp} 
Let $f\in C^\infty(\R)$ be bounded, 
  with $\sum_{k=0}^{n+2}\|f^{(k)}\|_{k-2}<\infty$, for some $n\ge 1$.
Let  $x_s=s^{-1}(\langle x\rangle
-a)$ for $a >0 $ and
$1\le s<\infty$. Let $A$ be an operator such that $\langle x \rangle^{-1}\mathcal{D}(A)\subset\mathcal{D}(A)$.
Define
 \[
B_k= \,{\mathrm{ad}_{\x}^k A}
\] 
and assume that $\|B_k\|<\infty$ for all $1\leq k\leq n$.
Then, for any $n\ge 1$, 
 \begin{align}\label{Hcomm-exp} [A, f(x_s)]&=  \sum_{1\le k\le n-1}(-1)^{k-1}{s^{- k}\over{k!}}B_kf^{(k)}(x_s)  +O(s^{-n}\|B_n\|), \end{align}
 uniformly in $a\in \R$. 
\end{lemma}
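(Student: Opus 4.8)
The plan is to reduce everything to the Helffer--Sj\"ostrand representation \eqref{fA-repr}, applied with the self-adjoint operator $x_s=s^{-1}(\x-a)$ and with $\tilde f$ the almost analytic extension of the \emph{fixed} function $f$, so that $f(x_s)=\int d\tilde f(z)\,R(z)$ where $R(z):=(z-x_s)^{-1}$. Since the commutator passes under the integral, $[A,f(x_s)]=\int d\tilde f(z)\,[A,R(z)]$, the whole problem reduces to expanding $[A,R(z)]$ in powers of $s^{-1}$. The starting identity is
\[
[A,R(z)]=R(z)[A,x_s]R(z)=s^{-1}R(z)B_1R(z),
\]
which uses $[A,x_s]=s^{-1}[A,\x]=s^{-1}B_1$; this is the only place where the factor $s^{-1}$ and the hypothesis $\x^{-1}\mathcal D(A)\subset\mathcal D(A)$ enter, the latter guaranteeing that $R(z)$ preserves $\mathcal D(A)$ so that the identity is meaningful.

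Next I would iterate, commuting each $B_k$ past a single resolvent via $[R(z),B_k]=R(z)[x_s,B_k]R(z)=-s^{-1}R(z)B_{k+1}R(z)$, where $[x_s,B_k]=-s^{-1}B_{k+1}$ is where the alternating sign originates (it reflects the convention $B_{k+1}=[B_k,\x]$). A straightforward induction on the number of symmetrization steps then yields the telescoping identity
\[
[A,R(z)]=\sum_{k=1}^{n-1}(-1)^{k-1}s^{-k}B_kR(z)^{k+1}+(-1)^{n-1}s^{-n}R(z)B_nR(z)^n .
\]

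I would then convert resolvent powers back into derivatives of $f$. Integrating the last identity against $d\tilde f(z)$ and using the functional-calculus identity $\int d\tilde f(z)(z-x_s)^{-k-1}=\frac{1}{k!}f^{(k)}(x_s)$ --- which follows from $(z-x_s)^{-k-1}=\frac{(-1)^k}{k!}\partial_z^{\,k}(z-x_s)^{-1}$, $k$ integrations by parts in $z$, and the fact that $\partial_z^{\,k}\tilde f$ is an almost analytic extension of $f^{(k)}$ --- the main sum reproduces exactly the right-hand side of \eqref{Hcomm-exp}. For the error I would use the self-adjointness bound $\|R(z)\|\le|\im z|^{-1}$ to get $\|R(z)B_nR(z)^n\|\le\|B_n\|\,|\im z|^{-(n+1)}$, so that the remainder is bounded in operator norm by
\[
s^{-n}\|B_n\|\int|d\tilde f(z)|\,|\im z|^{-(n+1)} .
\]
Estimate \eqref{tildef-est} with $p=n$ controls this integral by $\sum_{k=0}^{n+2}\|f^{(k)}\|_{k-n-1}$, which is finite and independent of $s$ and $a$ because $\|f^{(k)}\|_{k-n-1}\le\|f^{(k)}\|_{k-2}$ for $n\ge1$ and the latter sum is finite by hypothesis; this gives the stated $O(s^{-n}\|B_n\|)$, uniformly in $a$.

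The main obstacle is rigor rather than algebra. Since $A$ is unbounded, all commutator identities must first be established on the dense domain $\mathcal D(A)$, using that $R(z)$ and the functions of $x_s$ preserve $\mathcal D(A)$ (again by $\x^{-1}\mathcal D(A)\subset\mathcal D(A)$), and only then extended to bounded operators through the $B_k$; one must also verify that every main-term integral $\int d\tilde f(z)R(z)^{k+1}$ converges absolutely, via \eqref{tildef-est} with $p=k\le n-1$, and that the resulting constants genuinely do not depend on $s$ or $a$ --- this $s$- and $a$-uniformity is exactly what makes the expansion usable in Lemmas \ref{lemma:Hestimate} and \ref{lemma:Westimate}.
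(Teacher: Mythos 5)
Your proposal is correct and follows essentially the same route as the paper: the Helffer--Sj\"ostrand representation, the identity $s[A,(z-x_s)^{-1}]=(z-x_s)^{-1}[A,\x](z-x_s)^{-1}$ (justified on $\mathcal D(A)$ via $\x^{-1}\mathcal D(A)\subset\mathcal D(A)$), iterated commutation of the $B_k$ to the left, and the remainder bound $\|B_n\|\int|d\tilde f(z)||\im z|^{-n-1}$ controlled by \eqref{tildef-est}. The only difference is that you write out the telescoping iteration explicitly where the paper defers it to (B.14)--(B.15) of \cite{HunSig1}.
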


\begin{proof}
Using \eqref{fA-repr}, we have
\begin{align*}
[A, f(x_s)]=\int d\widetilde f(z) \big[ A , (z-x_s)^{-1} \big ],
\end{align*}
in the sense of quadratic forms on $\mathcal{D}(A)$. The hypothesis $\langle x\rangle^{-1} \mathcal{D}(A)\subset\mathcal{D}(A)$ shows that  $(z-x_s)^{-1}=\x^{-1}(z \x^{-1}-x_s\x^{-1})^{-1}$ maps $\mathcal{D}(A)$ into itself for $z$ with large $|\im z|$ and therefore for all $z$ with $\im z\neq 0$. Hence, since $[A,\langle x\rangle]=\mathrm{ad}^1_{\langle x\rangle}(A)$ is bounded, the formula
\begin{equation*}
s\big[ A , (z-x_s)^{-1} \big ]=(z-x_s)^{-1}[A,\langle x\rangle](z-x_s)^{-1}
\end{equation*}
holds in the sense of quadratic forms on $\mathcal{D}(A)$ ($\Im\lan Au, B^{-1} u\ran=\Im\lan u, A B^{-1} u\ran=\Im\lan  B B^{-1} u, A B^{-1} u\ran$).
Since $[A,\langle x\rangle]=\mathrm{ad}^1_{\langle x\rangle}(A)$ is bounded, we can proceed as in
 (B.14)-(B.15) of \cite{HunSig1}, commuting successively the commutators $ad^1_{\langle x\rangle}(A)$ to the left. This yields  
 \begin{align*}
[A, f(x_s)]&=\sum_{1\le k\le n-1}(-1)^{k-1}{s^{- k}\over{k!}} B_kf^{(k)}(x_s)+s^{-n}\Re(s);\\
 \Re(s)&=\int d\widetilde f(z)(z-x_s)^{-1}B_n(z-x_s)^{-n}.
     \end{align*}    

  Since the operator $B_n$ is bounded, we have
 \begin{align*} \| \Re(s)\|&\le  \| B_n\|  \int |d\widetilde f(z)| |z-x_s|^{-n-1}\\
 &\le  \| B_n\|  \int |d\widetilde f(z)| |\im z|^{-n-1}\ls \| B_n\| \sum_{k=0}^{n+2}\|f^{(k)}\|_{k-n-1}.\end{align*}
 This concludes the proof.
\end{proof}

\end{document}